\def\be{\begin{equation}}
\def\ee{\end{equation}}
\def\ber{\begin{eqnarray}}
\def\eer{\end{eqnarray}}
\def\rv{{\bf r}}
\def\Rv{{\bf R}}
\def\sv{{\bf s}}
\def\fv{{\bf f}}
\def\Po{{P_{opt}}}
\def\beq{\begin{equation}}
\def\eeq{\end{equation}}
\theoremstyle{plain}
\newtheorem{thm}{Theorem}
\newtheorem{rem}{Remark}
\theoremstyle{definition}
\theoremstyle{remark}
\newcommand\N{{\mathbb N}}
\newcommand\R{{\mathbb R}}
\newcommand\eps{\varepsilon}
\def\width1{4.5}
\DeclareMathOperator{\spt}{spt}
\begin{document}
\title{Optimal-transport formulation of electronic density-functional theory}
\author{Giuseppe Buttazzo,$^1$ Luigi De Pascale,$^2$ and Paola Gori-Giorgi$^3$}
\affiliation{$^1$Dipartimento di Matematica, Universit\`a di Pisa, Largo B. Pontecorvo 5 - 56127 Pisa, Italy\\
$^2$Dipartimento di Matematica Applicata, Universit\`a di Pisa, Via Buonarroti 1/C - 56127 Pisa, Italy\\
$^3$Department of Theoretical Chemistry and Amsterdam Center for Multiscale Modeling, FEW, Vrije Universiteit, De Boelelaan 1083, 1081HV Amsterdam, The Netherlands}
\date{\today}
\begin{abstract}
The most challenging scenario for Kohn-Sham density functional theory, that is when the electrons move relatively slowly trying to avoid each other as much as possible because of their repulsion (strong-interaction limit), is reformulated here as an optimal transport (or mass transportation theory) problem, a well established field of mathematics and economics. In practice, we show that solving the problem of finding the minimum possible internal repulsion energy for $N$ electrons in a given density $\rho(\rv)$ is equivalent to find the optimal way of transporting $N-1$ times the density $\rho$ into itself, with cost function given by the Coulomb repulsion.
We use this link to put the strong-interaction limit of density functional theory on firm grounds and to discuss the potential practical aspects of this reformulation.
\end{abstract}
\maketitle
\section{Introduction}
Electronic structure theory plays a fundamental role in many different fields: material science, chemistry and biochemistry, solid state physics, surface physics. Its goal is to solve in a reliable and computationally affordable way the many-electron problem, a complex combination of quantum mechanical and many-body effects. The most widely used technique, which achieves a reasonable compromise between accuracy and computational cost, is Kohn-Sham (KS) density functional theory (DFT) \cite{HohKoh-PR-64,KohSha-PR-65}.

Optimal transport or mass transportation theory studies the optimal transfer of masses from one location to another.  
Mass transportation theory dates back to 1781 when Monge \cite{Mon-BOOK-1781} posed the problem of finding the most economical way of moving soil from one area to another, and received a boost when Kantorovich in 1942 generalized it to what is now known as the Kantorovich dual problem \cite{Kan-DAN-42}. Optimal transport problems appear in various areas of mathematics and economics.

In this article we show that one of the most challenging scenario for KS DFT, that is when the repulsion between the electrons largely dominates over their kinetic energy, can be reformulated as an optimal transport problem. As we shall see, the potential of this link between two different well-established research areas has both formal and practical aspects.  

It is difficult to write a paper fully accessible to two different communities such as mass transportation and electronic density functional theory. In an effort towards this goal we have chosen to use for both the optimal transport and the DFT part the most commonly used notation in each case, translating from one to the other throughout the paper.
The article is organized as follows. We start in Sec.~\ref{sec_sceDFT} with a review of the motivations to study the strong-interaction limit of DFT and the challenges that this limit poses. Right after, in Sec.~\ref{sec_resDFT}, we discuss the implications for DFT of our mass transportation theory reformulation of this limit, anticipating the results that will be derived in the subsequent sections. This way, the first part of the paper is a self-contained presentation written with language that is entirely familiar to the density functional theory community.
  The mass transportation theory problem is then introduced in Sec.~\ref{sec_mass} and used in Secs.~\ref{sec_refVeeSCE}-\ref{sec_theorems} to address the strong-interaction limit of DFT and derive the results anticipated in Sec.~\ref{sec_resDFT}. Simple examples, mainly thought to illustrate the problem to the mass transportation theory community, are given in Sec.~\ref{oned}. This second part of the paper is thus mainly written in a language familiar to the optimal transport reader. The last Sec.~\ref{sec_conc} is devoted to a final discussion of the connection between these two different research areas and to conclusions and perspectives. Finally, many of the technical details are given in the Appendix.

\section{Strong interactions in DFT}
\label{sec_sceDFT}
In the formulation of Hohenberg and Kohn (HK) \cite{HohKoh-PR-64}, electronic ground-state properties are calculated by minimizing the energy functional $E[\rho]$ with respect to the particle density $\rho(\rv)$,
\be\label{EnergyFunctional}
E[\rho] = F[\rho]+\int v_{\rm ext}(\rv)\,\rho(\rv)\,d\rv,
\ee
where $v_{\rm ext}(\rv)$ is the external potential  and $F[\rho]$ is a universal functional of the density, defined as the expectation value of the internal energy (kinetic energy $\hat T=-\frac{1}{2}\sum_{i=1}^N\nabla^2_i$ plus electron-electron interaction energy $\hat V_{ee}=\sum_{i=1}^N\sum_{j=i+1}^N|\rv_i-\rv_j|^{-1}$) of the minimizing wave function that yields the density $\rho(\rv)$ \cite{Lev-PNAS-79},
\beq
F[\rho]=\min_{\Psi\to\rho}\langle\Psi|\hat{T}+\hat{V}_{ee}|\Psi\rangle.
\eeq
  Here and throughout the paper we use Hartree atomic units ($\hbar=m_e=a_0=e=1$).

In the standard Kohn-Sham approach \cite{KohSha-PR-65} the minimization of $E[\rho]$ in Eq.~\eqref{EnergyFunctional} is done under the assumption that the kinetic energy dominates over the electron-electron interaction by introducing the functional $T_s[\rho]$, corresponding to the minimum of the expectation value of  $\hat T$  alone over all fermionic (spin-$\frac{1}{2}$ particles) wave functions yielding the given $\rho$ \cite{Lev-PNAS-79},
\beq
T_s[\rho]=\min_{\Psi\to\rho}\langle\Psi|\hat{T}|\Psi\rangle.
\eeq
  The functional $T_s[\rho]$ defines a non-interacting reference system with the same density of the interacting one. The remaining part of the exact energy functional, 
\beq
E_{\rm Hxc}[\rho]\equiv F[\rho]-T_s[\rho], 
\eeq
is approximated. Usually $E_{\rm Hxc}[\rho]$ is split as $E_{\rm Hxc}[\rho]=U[\rho]+E_{xc}[\rho]$, where $U[\rho]$ is the classical Hartree functional,
\beq
U[\rho]=\frac{1}{2}\int d\rv\int d\rv'\frac{\rho(\rv)\rho(\rv')}{|\rv-\rv'|},
\label{eq_Hartree}
\eeq
 and the exchange-correlation energy $E_{xc}[\rho]$ is the crucial quantity that is approximated.  

The KS approach works well in many scenarios, but as expected, runs into difficulty where particle-particle interactions play a more prominent role. In such cases, the physics of the HK functional $F[\rho]$ is completely different than the one of the Kohn-Sham non-interacting system, so that trying to capture the difference $F[\rho]-T_s[\rho]$ with an approximate functional is a daunting task. A piece of {\em exact} information on $E_{xc}[\rho]$ is provided by the functional $V_{ee}^{\rm SCE}[\rho]$, defined as the minimum of the expectation value of $\hat{V}_{ee}$ alone over all wave functions yielding the given density $\rho(\rv)$,
\begin{equation}
	 V_{ee}^{\rm SCE}[\rho]=\min_{\Psi\to\rho}\langle\Psi|\hat{V}_{ee}|\Psi\rangle.
	\label{eq_VeeLevy}
\end{equation}
 The acronym ``SCE'' stands for ``strictly correlated electrons'' \cite{Sei-PRA-99}: $V_{ee}^{\rm SCE}[\rho]$ defines a system with maximum possible correlation between the relative electronic positions (in the density $\rho$), and it is the natural counterpart of the KS non-interacting kinetic energy $T_s[\rho]$. Its relevance for $E_{xc}[\rho]$ increases with the importance of particle-particle interactions  with respect to the kinetic energy \cite{SeiPerLev-PRA-99,SeiPerKur-PRL-00}. For low-density many-particle scenarios,
it has been shown that $V_{ee}^{\rm SCE}[\rho]$ is a much better zero-order approximation to $F[\rho]$ than $T_s[\rho]$ \cite{GorSeiVig-PRL-09,LiuBur-JCP-09,GorSei-PCCP-10}: this defines a ``SCE-DFT'' alternative and complementary to standard KS DFT. In more general cases, the dividing line between the regime where the KS approach with its current approximations works well and the regime where a SCE-based approach is more suitable is a subtle issue, with many complex systems being not well described by neither KS nor SCE (see also the discussion in Ref.~\onlinecite{GorSei-PCCP-10}).

The functional $V_{ee}^{\rm SCE}[\rho]$ also contains exact information on the important case of the stretching of the chemical bond \cite{TeaCorHel-JCP-10,GorSei-PCCP-10}, a typical situation in which restricted KS DFT encounters severe problems. The relevance of $V_{ee}^{\rm SCE}[\rho]$ for constructing a new generation of approximate $E_{xc}[\rho]$ has also been pointed out very recently by Becke \cite{Bec-cecam-11}. Notice that $V_{ee}^{\rm SCE}[\rho]$ also enters in the derivation of the Lieb-Oxford bound \cite{Lie-PLA-79,LieOxf-IJQC-81,Per-INC-91,LevPer-PRB-93,RasSeiGor-PRB-11}, an important exact condition on $E_{xc}[\rho]$.

Overall, constructing the functional $V_{ee}^{\rm SCE}[\rho]$ for a given density $\rho(\rv)$ in an exact and efficient way has the potential to extensively broaden the applicability of DFT. Only approximations for $V_{ee}^{\rm SCE}[\rho]$ were available \cite{SeiPerKur-PRA-00} until recently, when the mathematical structure of the exact $V_{ee}^{\rm SCE}[\rho]$ has been investigated in a systematic way \cite{SeiGorSav-PRA-07,GorVigSei-JCTC-09} and exact solutions for spherically-symmetric densities (which have been used in the first SCE-DFT calculations \cite{GorSeiVig-PRL-09,GorSei-PCCP-10}) have been produced. However, a general reliable algorithm to construct  $V_{ee}^{\rm SCE}[\rho]$ is still lacking, and many formal aspects still need to be addressed. Here is where mass transportation theory can play a crucial role. Reformulating $V_{ee}^{\rm SCE}[\rho]$ as an optimal transport problem allows to put the construction of this functional on firm grounds and to import algorithms from another well-established research field. 

\section{Results: an overview}
\label{sec_resDFT}

The problem posed by Eq.~\eqref{eq_VeeLevy}, that is searching for the minimum possible interaction energy in a given density,  was first addressed, in an approximate way, in the seminal work of Seidl and coworkers \cite{Sei-PRA-99,SeiPerLev-PRA-99,SeiPerKur-PRA-00}. Later on, in Refs.~\onlinecite{SeiGorSav-PRA-07} and \onlinecite{GorVigSei-JCTC-09}, a formal solution was given in the following way.
The admissible configurations of $N$ electrons in $d$ dimensions are restricted to a $d$-dimensional subspace $\Omega_0$ of the full $Nd$-dimensional configuration space. A generic point of $\Omega_0$ has the form $\Rv_{\Omega_0}(\sv) = (\fv_1(\sv),....,\fv_N(\sv))$
where $\sv$ is a $d$-dimensional vector that determines the position of, say, electron ``1", 
 and $\fv_i(\sv)$ ($i=1,...,N$, $\fv_1(\sv)=\sv$) are the {\it co-motion functions},  which determine the position of the $i$-th electron in terms of $\sv$.   The variable $\sv$ itself is distributed according to the normalized density $\rho(\sv)/N$.  The co-motion functions are implicit functionals of the density, determined by a set of differential equations that ensure the invariance of the density under the transformation $\sv \to \fv_i(\sv)$, 
\begin{equation}
\rho(\fv_i(\sv))d\fv_i(\sv)=\rho(\sv)d\sv.
\label{eq_df}
\end{equation}
They also satisfy group properties \cite{SeiGorSav-PRA-07} which ensure the indistinguishability of the $N$ electrons. The functional $V_{ee}^{\rm SCE}[\rho]$ is then given by
\beq
V_{ee}^{\rm SCE}[\rho]=\int d\sv\, \frac{\rho(\sv)}{N} \, \sum_{i=1}^N\sum_{j=i+1}^N\frac{1}{|\fv_i(\sv)-\fv_j(\sv)|}.
\eeq
Notice that while in chemistry only the three-dimensional case is interesting, in physics systems with reduced effective dimensionality (quantum dots, quantum wires, point contacts, etc.) play an important role.

As we shall see in Secs.~\ref{sec_mass}-\ref{sec_refVeeSCE}, this way of addressing the functional $V_{ee}^{\rm SCE}[\rho]$ corresponds to an attempt of solving the so-called {\em Monge problem} associated to the constrained minimization of Eq.~\eqref{eq_VeeLevy}. In the Monge problem, one essentially tries to transport a mass distribution $\rho_1(\rv)d\rv$ into a mass distribution $\rho_2(\rv)d \rv$  in {\it the most economical way} according to a given definition of the work necessary to move a unit mass from position $\rv_1$ to position $\rv_2$. 
For example, one may wish to move books from one shelf (``shelf 1'') to another (``shelf 2''), by minimizing the total work. The goal of solving the Monge problem is then to find an {\em optimal map} which assigns to every book in shelf 1 a unique final destination in shelf 2.
In Secs.~\ref{sec_mass}-\ref{sec_refVeeSCE}, it will then become clear that the co-motion functions are the optimal maps of the Monge problem associated to $V_{ee}^{\rm SCE}[\rho]$. 

However, it is well known in mass transportation theory that the Monge problem is very delicate and that proving in general the existence of the optimal maps (or co-motion functions) is extremely difficult. In 1942 Kantorovich proposed a relaxed formulation of the Monge problem, in which the goal is now to find the probability that, when minimizing the total cost, a mass element of $\rho_1$ at position $\rv_1$ be transported at position $\rv_2$ in $\rho_2$. As detailed in Sec.~\ref{sec_refVeeSCE}, this formulation is actually the appropriate one for the  constrained minimization of Eq.~\eqref{eq_VeeLevy}.

We were then able to prove in Sec.~\ref{sec_theorems} four theorems on $V_{ee}^{\rm SCE}[\rho]$. In the first one, the existence of a generalized minimizer for Eq.~\eqref{eq_VeeLevy} is rigorously established. It is useful to remind here that the functional $V_{ee}^{\rm SCE}[\rho]$ corresponds to the $\lambda\to \infty$ limit \cite{Sei-PRA-99,SeiPerLev-PRA-99} of the traditional adiabatic connection of DFT \cite{HarJon-JPF-74,Har-PRA-84,LanPer-SSC-75,Lev-PRA-91}, in which a functional $F_\lambda[\rho]$ depending on a real parameter $\lambda$ is defined as
\beq
F_{\lambda}[\rho]=\min_{\Psi\to\rho}\langle\Psi|\hat{T}+\lambda\,\hat{V}_{ee}|\Psi\rangle.
\label{eq_Flambda}
\eeq
If $\Psi_\lambda[\rho]$ is the minimizer of Eq.~\eqref{eq_Flambda}, and if we define
\beq
W_{\lambda}[\rho]\equiv\langle\Psi_\lambda[\rho]|\hat{V}_{ee}|\Psi_\lambda[\rho]\rangle-U[\rho],
\eeq
we have, under mild assumptions, the well-known exact formula \cite{LanPer-SSC-75} for the exchange-correlation functional of KS DFT:
\beq
E_{xc}[\rho]=\int_0^1W_\lambda[\rho] \, d\lambda.
\label{eq_ExcInt}
\eeq
When $\lambda\to\infty$ it can be shown  that \cite{Sei-PRA-99,SeiPerLev-PRA-99,SeiGorSav-PRA-07,GorVigSei-JCTC-09}
\beq
\lim_{\lambda\to\infty} W_\lambda[\rho]=V_{ee}^{\rm SCE}[\rho]-U[\rho],
\eeq
where $U[\rho]$ is the Hartree functional of Eq.~\eqref{eq_Hartree}. We have thus put the existence of this limit, which contains a piece of exact information that can be used to model $E_{xc}[\rho]$ \cite{Ern-CPL-96,BurErnPer-CPL-97,SeiPerLev-PRA-99,TeaCorHel-JCP-10,GorSei-PCCP-10,Bec-cecam-11},  on firm grounds.

When $\rho(\rv)$ is ground-state $v$-representable $\forall\, \lambda$, $\Psi_\lambda[\rho]$ is the ground state of the hamiltonian
\begin{equation}
\hat{H}_\lambda[\rho]  =  \hat{T}+\lambda\,\hat{V}_{ee}+\hat{V_\lambda}[\rho], 
\label{eq_Hlambda}
\end{equation}
where
\begin{equation}
\hat{V_\lambda}[\rho] =  \sum_{i=1}^N v_\lambda[\rho](\rv_i)
\end{equation}
is a one-body local potential that keeps the density equal to the physical ($\lambda=1$) $\rho(\rv)$ $\forall\; \lambda$. In Refs.~\onlinecite{SeiGorSav-PRA-07,GorVigSei-JCTC-09} and \onlinecite{GorSeiVig-PRL-09} it has been argued that
\beq
\lim_{\lambda\to\infty} \frac{v_\lambda[\rho](\rv)}{\lambda}=v_{\rm SCE}[\rho](\rv),
\eeq
where $v_{\rm SCE}[\rho](\rv)$ is related to the co-motion functions via the classical equilibrium equation \cite{SeiGorSav-PRA-07}
\beq
\nabla v_{\rm SCE}[\rho](\rv)=\sum_{i= 2}^N \frac{\rv-\fv_i(\rv)}{|\rv-\fv_i(\rv)|^3},
\label{eq_vSCEf}
\eeq
and it is the counterpart of the KS potential in the strong-interaction limit.
In fact, we also have
\beq
\frac{\delta V_{ee}^{\rm SCE}[\rho]}{\delta \rho(\rv)}=-v_{\rm SCE}[\rho](\rv).
\label{eq_VeeSCEfuncder}
\eeq
While Eq.~\eqref{eq_vSCEf} is only valid if the co-motion functions (optimal maps) exist, Eq.~\eqref{eq_VeeSCEfuncder} is more general.
As we shall see in Secs.~\ref{sec_refVeeSCE}-\ref{sec_theorems},  the Kantorovich problem can be rewritten in a useful dual formulation in which the so called {\em Kantorovich potential} $u(\rv)$ plays a central role. The relation between the Kantorovich potential and $v_{\rm SCE}[\rho](\rv)$ is simply
\beq
u(\rv)=-v_{\rm SCE}[\rho](\rv)+C,
\label{eq_reluvSCE}
\eeq
where $C$ is a constant that appears if we want to set $v_{\rm SCE}(|\rv|\to\infty)=0$, with $|\rv|$ denoting the distance from the center of charge of the external potential.
With our Theorems~\ref{dual}-\ref{thsumm} we have proved that under very mild assumptions on $\rho(\rv)$ this potential exists, it is bounded and it is differentiable almost everywhere, also for cases in which the co-motion functions do not exist, thus addressing the $v$-representability problem in the strong-interaction ($\lambda\to\infty$) limit. 

Theorem \ref{thsumm} also proves that the value of $V_{ee}^{\rm SCE}[\rho]$ is exactly given by the  maximum of the Kantorovich dual problem
\begin{eqnarray}
& & V_{ee}^{\rm SCE}[\rho]  =  \\
& & 	\max_u\left\{ \int u(\rv)\rho(\rv) d\rv \ : \ \sum_{i=1}^N u(\rv_i)\le \sum_{i=1}^N\sum_{j>i}^N \frac{1}{|\rv_i-\rv_j|} \right\}.\nonumber
\label{eq_VeeKantorovic}
\end{eqnarray}
The condition $\sum_{i=1}^N u(\rv_i)\le \sum_{i=1}^N\sum_{j>i}^N\frac{1}{|\rv_i-\rv_j|} $ has a simple physical meaning: it requires that at optimality the allowed subspace $\Omega_0$ of the full $Nd$ configuration space be a minimum of the classical potential energy. This can be easily verified by rewriting this condition in terms of $v_{\rm SCE}[\rho](\rv)$ using Eq.~\eqref{eq_reluvSCE}:
\beq
\sum_{i=1}^N\sum_{j>i}^N\frac{1}{|\rv_i-\rv_j|}+\sum_{i=1}^N v_{\rm SCE}[\rho](\rv_i)\ge E_{\rm SCE},
\eeq
where the equality is satisfied only for configurations belonging to $\Omega_0$, and $E_{\rm SCE}$ is the total energy in the SCE limit \cite{GorVigSei-JCTC-09}: $E_{\rm SCE}=\lim_{\lambda\to\infty}\lambda^{-1}E_\lambda$, where $E_\lambda$ is the ground-state energy of \eqref{eq_Hlambda}.

Equation~\eqref{eq_VeeKantorovic} is related to the Legendre transform formulation of Lieb \cite{Lie-IJQC-83} of the KS functionals, but it has the advantage of being only a maximization under linear constraints, meaning that it can be dealt with linear programming techniques.

We were not able to prove the existence of the co-motion functions (optimal maps) in the general case, although we have hints that, for reasonable densities, it migth be possible. As mentioned, this is always a delicate problem. We could only prove the existence of an optimal map in the special case $N=2$ (Appendix \ref{app_fN2}). 

In the following sections we introduce the optimal transport problem and we give the details of the results anticipated here.

\section{Optimal transport} 
\label{sec_mass}
In 1781 Gaspard Monge \cite{Mon-BOOK-1781} proposed a model to describe the work necessary to move a mass distribution $p_1=\rho_1\,dx$ into a final destination $p_2=\rho_2\,dx$, given the unitary transportation cost function $c(x,y)$ which measures the work to move a unit mass from $x$ to $y$. The goal is to find a so-called {\it optimal transportation map} $f$ which moves $p_1$ into $p_2$, i.e. such that
\begin{equation}\label{push1}
p_2(S)=p_1\big(f^{-1}(S)\big)\qquad \forall\; {\rm measurable\; sets\; }S,
\end{equation}
with minimal total transportation cost
\begin{equation}
\int c\big(x,f(x)\big)\,dp_1.
\end{equation}
The measures $p_1$ and $p_2$, which must have equal mass (normalized to one for simplicity), are called {\it marginals}. 
The natural framework for this kind of problems is the one where $X$ is a metric space and $p_1,p_2$ are probabilities on $X$. However, the existence of an optimal transport map is a very delicate question (for a simple example, see Sec.~\ref{oned}), even in the classical Monge case, where $X$ is the Euclidean space $\R^d$ and the cost function is the distance between $x$ and $y$, $c(x,y)=|x-y|$. Thus in 1942 Kantorovich \cite{Kan-DAN-42} proposed  a relaxed formulation of the Monge transport problem: the goal is now to find a probability $P(x,y)$ on the product space, which minimizes the relaxed transportation cost
$$\int c(x,y)\,P(dx,dy)$$
over all admissible probabilities $P$, where admissibility means that the projections $\pi^\#_1P$ and $\pi^\#_2P$ coincide with the marginals $p_1$ and $p_2$ respectively. Here the notation $\pi^\#_iP$ means that we integrate $P$ over all variables except the $i^{\rm th}$.
The Kantorovich problem then reads
\beq
\min_P\Big\{\int c(x,y)\,P(dx,dy)\ :\ \pi^\#_jP=p_j\hbox{ for }j=1,2\Big\},
\label{eq_kant2mar}
\eeq
where $j=1,2$ denotes, respectively, the variables $x$ and $y$. The minimizing $P(dx,dy)=P(x,y)dxdy$ in Eq.~\eqref{eq_kant2mar}, called {\em transport plan}, gives the probability that a mass element in $x$ be transported in $y$: this is evidently more general than the Monge transportation map $f$ which assigns a unique destination $y$ to each $x$.

The generalization to more than two marginals is crucial for our purpose and is written as
\begin{eqnarray}
 \min_P\Big\{\int c(x_1,\dots,x_N) \,P(dx_1,\dots,dx_N)\ : \ \nonumber \\
 \pi^\#_jP=p_j\hbox{ for }j=1,\dots,N\Big\}.
\label{optpb}
\end{eqnarray}
The analogous of the Monge problem in this case is to find $N$ maps $f_i$ such that $f_1(x)=x$, $p_i(S)=p_1\big(f_i^{-1}(S)\big)$ for every measurable set $S$, and $(f_1,\dots,f_N)$ minimizes 
$$\int c(f_1(x_1),\dots,f_N(x_1))\,p_1(dx_1),$$
among all maps with the same property. 

\section{Reformulation of $V_{ee}^{\rm SCE}[\rho]$} 
\label{sec_refVeeSCE}
We can now see that the way in which  $V_{ee}^{\rm SCE}[\rho]$ was addressed in Refs.~\onlinecite{SeiGorSav-PRA-07} and \onlinecite{GorVigSei-JCTC-09} (briefly reviewed in Sec.~\ref{sec_resDFT}) corresponds to an attempt of solving the Monge problem associated to the constrained minimization of Eq.~\eqref{eq_VeeLevy}, where the co-motion functions are the optimal maps. Indeed, Eq.~\eqref{push1} is a weak form of Eq.~\eqref{eq_df} which does not require $f$ to be differentiable.

However, as said, proving the existence of the optimal maps is in general a delicate problem. Moreover, the problem posed by Eq.~\eqref{eq_VeeLevy} has actually the more general Kantorovich form \eqref{optpb}. This can be seen by doing the following (with $x\in \R^d$):
\begin{itemize}
	\item identify the probability $P(d x_1,\dots,d x_N)$ with $|\Psi(x_1,\dots,x_N)|^2dx_1,\dots,dx_N$;
	\item set all the marginals $p_i$ equal to the density divided by the number of particles $N$, $p_i=\frac{1}{N}\rho\, d x$;
	\item set the cost function equal to the electron-electron Coulomb repulsion, \begin{equation}
	c(x_1,x_2,\dots,x_N)=\sum_{i=1}^N\sum_{j=i+1}^N\frac{1}{|x_i-x_j|} \label{eq_Coulcost}.\end{equation}
\end{itemize}
Thus, solving the problem of finding the minimum possible electron-electron repulsion energy in a given density is equivalent to find the optimal way of transporting $N-1$ times the density $\rho$ into itself, with cost function given by the Coulomb repulsion, in the relaxed Kantorovich formulation.

What are the advantages of this reformulation? As anticipated in Sec.~\ref{sec_resDFT}, we can put many of the conjectures on $V_{ee}^{\rm SCE}[\rho]$ \cite{SeiGorSav-PRA-07,GorSeiVig-PRL-09,GorSei-PCCP-10} on firm grounds, and we can rewrite Eq.~\eqref{eq_VeeLevy} in a convenient dual form that allows to use linear programming techniques, with the potential of giving access to a toolbox of algorithms already developed in a different, well-established, context.

\section{Theorems on $V_{ee}^{\rm SCE}[\rho]$} 
\label{sec_theorems}
From the point of view of mass transportation theory, the problem of Eq.~\eqref{eq_VeeLevy} poses two challenges: {\it i}) the cost function corresponding to the Coulomb potential, Eq.~\eqref{eq_Coulcost}, is different from the usual cost considered in the field. In particular, it is not bounded at the origin and it decreases with distance, thus requiring a generalized formal framework; {\it ii}) the literature on the problem with several marginals is not very extensive (see, e.g., \cite{GanSwi-CPAM-98,RacRus-BOOK-98}). Nonetheless, we could prove several results. In what follows, we state them, relegating many technical details of the proofs in the Appendix. 
\begin{thm}\label{exis}
If the cost function $c$ is nonnegative and lower semicontinuous there exists an optimal probability $P_{opt}$ for the minimum problem \eqref{optpb}.
\end{thm}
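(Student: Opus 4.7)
The plan is the direct method of the calculus of variations: take a minimizing sequence $\{P_n\}$ of admissible probability measures on $(\R^d)^N$, extract a narrowly (weakly) convergent subsequence, and show that its limit is both admissible and minimizing. The admissible set is non-empty, since the product measure $p_1\otimes\cdots\otimes p_N$ belongs to it; if the infimum in \eqref{optpb} is $+\infty$ every admissible measure is already a minimizer, so we may assume it is finite and, by nonnegativity of $c$, that $\int c\,dP_n$ is uniformly bounded.

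First I would establish tightness of the admissible set. Each marginal $p_j$ is a fixed probability on $\R^d$ and hence tight by Ulam's theorem: for every $\eps>0$ there is a compact $K_j\subset\R^d$ with $p_j(\R^d\setminus K_j)<\eps/N$. Then $K:=K_1\times\cdots\times K_N$ is compact in $(\R^d)^N$ and, for any admissible $P$,
\begin{equation*}
P\bigl((\R^d)^N\setminus K\bigr)\le\sum_{j=1}^N p_j(\R^d\setminus K_j)<\eps.
\end{equation*}
Prokhorov's theorem then yields a subsequence (not relabelled) converging narrowly to some probability $P_\infty$ on $(\R^d)^N$. Admissibility passes to the limit: for every bounded continuous $\varphi:\R^d\to\R$ the function $\varphi\circ\pi_j$ is bounded continuous on $(\R^d)^N$, so
\begin{equation*}
\int\varphi\,d(\pi_j^\# P_\infty)=\lim_n\int\varphi\circ\pi_j\,dP_n=\int\varphi\,dp_j,
\end{equation*}
whence $\pi_j^\# P_\infty=p_j$ for every $j$.

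The heart of the argument, and the only step where the hypotheses on $c$ genuinely enter, is lower semicontinuity of $P\mapsto\int c\,dP$ under narrow convergence. Since $c\ge 0$ is lower semicontinuous on $(\R^d)^N$, I would approximate it from below by an increasing sequence of bounded continuous nonnegative functions $c_k\uparrow c$; a concrete choice is the truncated Moreau--Yosida regularization $c_k(x)=\min\bigl\{k,\,\inf_y[c(y)+k\,|x-y|]\bigr\}$, which is $k$-Lipschitz, bounded by $k$, nonnegative, and increases pointwise to $c$. For each fixed $k$, boundedness and continuity of $c_k$ give
\begin{equation*}
\int c_k\,dP_\infty=\lim_n\int c_k\,dP_n\le\liminf_n\int c\,dP_n,
\end{equation*}
and the monotone convergence theorem on the left as $k\to\infty$ yields $\int c\,dP_\infty\le\liminf_n\int c\,dP_n$. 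Thus $P_\infty$ attains the infimum and may be taken as $\Po$.

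The main subtlety worth flagging is that the Coulomb cost \eqref{eq_Coulcost} of interest here is unbounded and equals $+\infty$ on the coincidence sets $\{x_i=x_j\}$, so classical Kantorovich existence statements formulated for bounded continuous costs do not apply verbatim. Nonnegativity and lower semicontinuity are precisely the two hypotheses needed to reduce to the bounded continuous case through the monotone approximation above, which is why the theorem is stated under exactly these two assumptions.
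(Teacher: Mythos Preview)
Your proof is correct and follows essentially the same route as the paper's: the direct method with tightness from the fixed marginals, Prokhorov to extract a narrowly convergent subsequence, and lower semicontinuity of the cost functional. You simply make explicit several steps the paper leaves implicit---the nonemptiness of the admissible class, the product-of-compacts tightness estimate, the passage of marginals to the limit, and the Moreau--Yosida approximation to obtain lower semicontinuity of $P\mapsto\int c\,dP$ from nonnegativity and lower semicontinuity of $c$.
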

\begin{proof}
The proof is an application of the Prokhorov compactness theorem for measures. In fact, taken a minimizing sequence $(P_n)$ for problem \eqref{optpb}, since they are all probabilities, the sequence $(P_n)$ is compact in the weak* convergence of measures, so (a subsequence of) it converges weakly* to a nonnegative measure $P$, and this is enough to obtain
$$\int c\,dP\le\liminf_n\int c\,dP_n.$$
Then $P$ is a good candidate for being an optimal probability for problem \eqref{optpb}. To achieve the proof it remains to show that $P$ is a probability and that the marginal condition $\pi^\#_jP=p_j$ is fulfilled. This is true if the convergence of $(P_n)$ to $P$ is ``narrow'', which by Prokhorov theorem amounts to show the so-called tightness condition:
$$\forall\eps>0\ \exists K\hbox{ compact in $\R^{Nd}$ :}\;P_n(\R^{Nd}\setminus K)<\eps,\ \forall n\in\N.$$
The tightness condition above follows easily by the fact that all $P_n$ satisfy the 
marginal conditions $\pi^\#_jP_n=p_j$ ($j=1,\dots,N)$.
\end{proof}
\begin{rem}\label{remsim}{\rm
If the marginals $p_1,\dots,p_N$ are all equal and if the cost function $c$ satisfies the symmetry condition
\begin{equation}\label{simm}
c(x_1,\dots,x_N)=c(x_{k_1},\dots,x_{k_N})
\end{equation}
for all permutations $k$, then the existence theorem above holds with $P_{opt}$ which satisfies the same symmetry condition. In fact, it is enough to notice that, taken a probability $P$, the new probability
$$\tilde P(x_1,\dots,x_N)=\frac{1}{N!}\sum_kP(x_{k_1},\dots,x_{k_N}),$$
where $k$ runs over all permutations of $\{1,\dots,N\}$, has a cost less than or equal to the one of $P$ and the same marginals.
}\end{rem}
We now turn to the important dual reformulation. The standard dual problem in optimal transport theory is:
\begin{thm}\label{dual}
Let $c$ be a lower semicontinuous and finite valued function, then 
\begin{eqnarray}
 & &  \min_P\biggl\{\int c(x_1,\dots,x_N)\,  P(dx_1,\dots,dx_N)\, :\nonumber  \\
 & & \qquad \qquad \qquad \qquad \qquad \pi^\#_jP=p_j\hbox{ \rm for }j=1,\dots,N\biggr\}\nonumber \\
 & &   =\max_{u_j}\biggl\{\sum_{j=1}^N\int u_j\,dp_j  : \; \sum_{j=1}^Nu_j(x_j)\le c(x_1,\dots,x_N)\biggr\}. \nonumber
\end{eqnarray}
Moreover, the dual maximization problem also admits a solution.
\end{thm}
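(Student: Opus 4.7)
The plan is to split the duality assertion into three parts: weak duality, no duality gap, and existence of a dual maximizer.

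\textbf{Weak duality.} For any admissible plan $P$ with marginals $p_j$ and any tuple $(u_1,\dots,u_N)$ satisfying $\sum_j u_j(x_j)\le c(x_1,\dots,x_N)$ pointwise, integrating the constraint against $P$ yields
\begin{equation*}
\sum_{j=1}^{N}\int u_j\,dp_j=\int \sum_{j=1}^{N} u_j(x_j)\,dP \le \int c\,dP,
\end{equation*}
so the dual supremum is bounded above by the primal minimum; primal existence has already been supplied by Theorem \ref{exis}.

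\textbf{No duality gap.} I would apply Fenchel-Rockafellar duality on the Banach space of bounded continuous $N$-tuples. Let $\Theta(u)=-\sum_j\int u_j\,dp_j$ (linear, continuous) and let $\Xi(u)=0$ or $+\infty$ according as the pointwise constraint holds or fails (convex, and lsc thanks to the lower semicontinuity of $c$). Computing the conjugates shows that $\Xi^*$ is the support functional of nonnegative measures $P$ satisfying $\int\sum_j u_j\,dP\le\int c\,dP$, while $\Theta^*$ is the indicator of the marginal constraints $\pi^\#_jP=p_j$; Fenchel-Rockafellar then identifies $\inf(\Theta+\Xi)$ with a primal minimum of exactly the required form. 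An equivalent route is Sion's minimax theorem applied directly to
\begin{equation*}
\inf_P\sup_{u_1,\dots,u_N}\Bigl\{\int c\,dP+\sum_j\int u_j\,dp_j-\int\sum_j u_j(x_j)\,dP\Bigr\},
\end{equation*}
using the weak-$*$ compactness exploited in the proof of Theorem \ref{exis} to exchange $\inf$ and $\sup$.

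\textbf{Existence of a dual maximizer.} Here I would invoke the $c$-transform: given any admissible tuple, replace $u_1$ by
\begin{equation*}
u_1^c(x_1)=\inf_{x_2,\dots,x_N}\Bigl[c(x_1,\dots,x_N)-\sum_{j\ge 2}u_j(x_j)\Bigr],
\end{equation*}
which is still admissible, dominates $u_1$ pointwise, and so cannot decrease the dual value; iterating over the other indices produces a ``$c$-concave'' tuple. The translation freedom $u_j\mapsto u_j+\alpha_j$ with $\sum_j\alpha_j=0$ is eliminated by normalizing each $u_j$ at a reference point, after which one extracts from a maximizing sequence a subsequence converging in a suitable weak sense; lower semicontinuity of $c$ passes the constraint to the limit, and continuity of $u\mapsto\int u\,dp_j$ under the chosen convergence gives attainment.

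The last step is the main obstacle: without any boundedness hypothesis on $c$ one must keep the normalized $c$-transforms from drifting to $-\infty$ and ensure that they remain uniformly $p_j$-integrable. Finite-valuedness of $c$ is exactly what prevents collapse of the $c$-transforms, and the quantitative estimates together with the precise choice of function space in which to carry out the extraction are the delicate technical points, to be relegated to the appendix.
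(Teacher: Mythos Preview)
The paper does not actually prove Theorem~\ref{dual}: it introduces the statement with the words ``The standard dual problem in optimal transport theory is:'' and then moves on, treating it as a known result (the subsequent Theorem~\ref{dualbis}, which handles the $+\infty$-valued Coulomb cost, is likewise only referenced to \cite{Sch-arxiv-11}). So there is no ``paper's own proof'' to compare against; the authors are simply quoting the classical Kantorovich duality for multi-marginal transport with a finite lower semicontinuous cost.

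Your outline is a faithful sketch of how that classical result is usually established: weak duality by integration, absence of a duality gap via Fenchel--Rockafellar (or a minimax argument exploiting the tightness/compactness already used in Theorem~\ref{exis}), and attainment through the $c$-transform mechanism. This is the standard route in, e.g., Villani or Kellerer, specialized to $N$ marginals. Two caveats are worth flagging if you intend to flesh it out. First, working in $C_b$ requires some care because $c$ is merely finite-valued, not bounded; the usual fix is to approximate $c$ from below by bounded continuous costs $c_k\uparrow c$, establish duality for each $c_k$, and pass to the limit by monotone convergence on the primal side and by a diagonal argument on the dual side. Second, in the attainment step the $c$-transform of a bounded function need not be bounded above when $c$ is unbounded, so the ``suitable weak sense'' of convergence you allude to has to accommodate functions that are merely upper semicontinuous and $p_j$-integrable; this is exactly the technical point the authors sidestep by citing the literature, and which they revisit in Theorem~\ref{thsumm} for the specific Coulomb cost.
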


\begin{rem}\label{remsim2}{\rm
Again, if $p_1=\dots=p_N=p$ and if the cost function $c$ satisfies the symmetry condition \eqref{simm}, then the dual problem admits a solution $u_1=\dots=u_N=u$. In fact, if $u_1,\dots,u_N$ is an optimal solution of the dual problem, the function
$$u(x)=\frac{1}{N}\big(u_1(x)+\dots+u_N(x)\big)$$
has the same maximal dual cost, and satisfies the constraint
$$u(x_1)+\dots+u(x_N)\le c(x_1,\dots,x_N).$$
Therefore, in this situation the dual problem becomes
\begin{equation}\label{dualpb}
\max_u\Big\{N\int u\,dp\ :\ 
\sum_{i=1}^N u(x_i)\le c(x_1,\dots,x_N)\Big\}.
\end{equation}
An optimal function $u$ for the dual problem \eqref{dualpb} is called {\it Kantorovich potential}.
}\end{rem}

However, the theorem above does not apply directly to the optimal transport problem of interest here, because the cost, given by Eq.~\eqref{eq_Coulcost}, takes the value $+\infty$ on the set $\{x_i=x_j\hbox{ for some }i\ne j\}$. The dual formulation then takes the following aspect (see for instance \cite{Sch-arxiv-11}).

\begin{thm}\label{dualbis} 
Let $c$ be a Borel function with values in $[0,+\infty]$ and assume that
$c$ is $p_1\otimes\dots\otimes p_N$ almost everywhere finite. Assume moreover that there exists a finite cost transport plan $\hat P$. Then there exists Borel measurable dual maximizers $u_i$ with values in $[-\infty, +\infty)$ such that
\begin{eqnarray*}
  & & \min_P\biggl\{\int c(x_1,\dots,x_N)\,   P(dx_1,\dots,dx_N) \; : \\ 
& & \qquad \qquad \qquad \qquad \qquad \pi^\#_jP=p_j \;{\rm for} j=1,\dots,N\biggr\} \\
 & &  =\max_{u_j}\biggl\{\int\sum_{j=1}^N u_j(x_j)\,  d{\hat P}  (dx_1,\dots,dx_N)\; : \\
  &  & \qquad \qquad \qquad \qquad \qquad  \sum_{j=1}^Nu_j(x_j)\le c(x_1,\dots,x_N)\biggr\}.
\end{eqnarray*}
\end{thm}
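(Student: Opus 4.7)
The plan is to establish the two inequalities of the duality separately, with the bulk of the work on the direction primal $\le$ dual. Weak duality (primal $\ge$ dual) is immediate: for any admissible plan $P$ and any admissible potentials $(u_j)$, integrating the pointwise constraint $\sum_j u_j(x_j) \le c(x_1,\dots,x_N)$ against $P$ and using that $\pi^\#_j P = p_j = \pi^\#_j \hat P$ yields
\[
\int \textstyle\sum_j u_j\, d\hat P \;=\; \textstyle\sum_j \int u_j\, dp_j \;=\; \int \textstyle\sum_j u_j\, dP \;\le\; \int c\, dP,
\]
the left-hand integral being well-defined in $[-\infty,+\infty)$ because $\sum_j u_j \le c$ is dominated above by a $\hat P$-integrable function. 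Taking the infimum over $P$ and the supremum over $(u_j)$ yields the easy direction.

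For the reverse inequality I would approximate $c$ from below by a monotone increasing sequence of bounded lower semicontinuous functions $c^{(n)}\uparrow c$ (available via the Vitali--Carath\'eodory theorem applied on the reference measure $p_1\otimes\cdots\otimes p_N$, using that $c$ is a.e.\ finite). Theorem \ref{dual} applies to each $c^{(n)}$ and produces bounded Borel dual maximizers $u_j^{(n)}$ with $\sum_j u_j^{(n)}(x_j) \le c^{(n)} \le c$ and $\sum_j \int u_j^{(n)}\, dp_j = \min_P \int c^{(n)}\, dP$. The primal values converge upward to the primal value for $c$: the sequence is increasing and bounded above by $\int c\, d\hat P < \infty$, and taking minimizers $P_n$ (which exist for lsc $c^{(n)}$ by Theorem \ref{exis}), extracting a weak* limit $P^\star$ by the same tightness argument as in Theorem \ref{exis}, and invoking Fatou with $c^{(m)} \le c^{(n)}$ for $m\le n$ yields $\int c\, dP^\star \le \lim_n \min_P\int c^{(n)}\, dP$, so $\min_P\int c\, dP = \lim_n \min_P\int c^{(n)}\, dP$.

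The main obstacle is extracting from $(u_j^{(n)})_n$ a Borel pointwise limit $u_j\colon\R^d\to[-\infty,+\infty)$ that remains admissible for the cost $c$ and preserves the dual objective. One first normalizes additive constants so that $u_j^{(n)}(x_j^\star) = 0$ at fixed reference points $x_j^\star$ for $j = 2,\dots,N$, absorbing all slack into $u_1^{(n)}$; the remaining family admits a pointwise $\hat P$-a.e.\ subsequential limit (via a diagonal extraction, or by replacing the sequence with its pointwise $\liminf$ after a monotonization step). The constraint $\sum_j u_j(x_j) \le c$ survives the pointwise passage because $c^{(n)} \le c$. The dual objective is transferred to the limit by applying Fatou's lemma to the nonnegative sequence $c - \sum_j u_j^{(n)} \ge 0$ integrated against $\hat P$:
\[
\int c\, d\hat P - \int \textstyle\sum_j u_j\, d\hat P \;\le\; \liminf_n \int\!\bigl(c - \textstyle\sum_j u_j^{(n)}\bigr) d\hat P \;=\; \int c\, d\hat P - \lim_n \min_P\!\int c^{(n)}\, dP,
\]
which gives $\int \sum_j u_j\, d\hat P \ge \min_P\int c\, dP$ and closes the duality. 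The two formulation choices in the statement---values in $[-\infty,+\infty)$ and integration against $\hat P$ rather than against the individual marginals---are precisely what this argument needs, since the limiting $u_j$ may be $-\infty$ on collision-reachable sets (where $c = +\infty$), and the finiteness of the dual objective is guaranteed only through the hypothesis $\int c\, d\hat P < \infty$.
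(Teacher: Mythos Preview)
The paper does not actually prove Theorem~\ref{dualbis}: it is stated with a reference to Beiglb\"ock--L\'eonard--Schachermayer \cite{Sch-arxiv-11} and then used as input for Theorem~\ref{thsumm}. So your proposal is not competing with an argument in the paper; you are attempting to supply what the paper outsources.

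Your outline has two genuine gaps. First, Vitali--Carath\'eodory produces \emph{upper} semicontinuous approximants from below and lower semicontinuous ones from above, so your claimed monotone lsc approximation $c^{(n)}\uparrow c$ is not justified for a general Borel cost. For the Coulomb cost this is harmless, since $c$ is already lsc and the truncations $c^{(n)}=\min(c,n)$ are bounded and lsc; but the theorem is stated for Borel $c$, and there the approximation step fails as written.

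Second, and more seriously, the extraction of a pointwise limit of the potentials $u_j^{(n)}$ is the heart of the matter and is only gestured at. Theorem~\ref{dual} gives bounded $u_j^{(n)}$ for each $n$, but with no uniform bound in $n$; your normalization $u_j^{(n)}(x_j^\star)=0$ for $j\ge2$ controls the $u_j^{(n)}$ from above via the constraint, but not from below, and a diagonal extraction or a bare pointwise $\liminf$ does not by itself produce an admissible limit with the right dual value. The reference the paper cites handles exactly this point with a Koml\'os-type convex-combination argument, which is not elementary. A related issue surfaces already in your weak-duality chain: the identity $\int\sum_j u_j\,d\hat P=\sum_j\int u_j\,dp_j=\int\sum_j u_j\,dP$ presupposes that each $u_j$ has a well-defined $p_j$-integral, which is not automatic for functions valued in $[-\infty,+\infty)$; indeed, this integrability obstruction is precisely why the dual objective in the statement is written against $\hat P$ rather than against the marginals.
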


\noindent The assumption that $c$ is a Borel function is large enough to include continuous and lower semicontinuous functions (also taking the value $+\infty$), in particular the Coulomb potential in Eq.~\eqref{eq_Coulcost}.

The dual form of Theorem \ref{dualbis} does not allow explicit computations since it involves a plan $\hat P$ which may be not explicitly known. To overcome this difficulty we were able to prove that, for the cost \eqref{eq_Coulcost} under consideration, the more useful dual form \eqref{dualpb} still holds:

\begin{thm}\label{thsumm}
Let $c$ be the cost \eqref{eq_Coulcost} and assume all marginal measures $p_j$ coincide. Then there exists a maximizer $u$ for the dual problem of Theorem \ref{dualbis} which satisfies the formula
$$u(x)=\inf_{y_i}\Big\{c(x,y_1\dots,y_{N-1})-\sum_{i=1}^{N-1}u(y_i)\ :\ y_i\in\R^d\Big\}.$$
Such Kantorovich potential $u$ is also bounded and verifies the equality
$$\int\sum_{j=1}^N u(x_j)\,d{\hat P}(dx_1,\dots,dx_N)=N\int u(x)\,dp(x).$$
Moreover, if $p=\frac{1}{N}\rho(x)\,dx$, then $u$ is differentiable almost everywhere and $\nabla u$ is locally bounded.
\end{thm}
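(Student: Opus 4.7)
My plan is to prove Theorem~\ref{thsumm} by an adapted $c$-transform construction that exploits the two distinguishing features of the Coulomb cost \eqref{eq_Coulcost}: non-negativity together with boundedness $c\le N(N-1)/(2\delta)$ on configurations with all pairwise distances $\ge\delta$, and decay of $c$ at infinity.

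First, I would combine Theorem~\ref{dualbis} with the symmetrization idea of Remark~\ref{remsim2}, suitably extended to Borel functions valued in $[-\infty,+\infty)$ by averaging over permutations, to produce a single symmetric dual maximizer $u_0$ with $\sum_ju_0(x_j)\le c$. A uniform \emph{upper bound} $u_0\le M$ can be extracted directly from the constraint: since $u_0$ is finite $p$-a.e., Lusin's theorem provides a Borel set $E\subset\R^d$ of positive Lebesgue measure on which $u_0\ge -K$. Fixing $\delta>0$ sufficiently small, the positivity of $|E|$ allows one, for any $x\in\R^d$, to select $y_1,\dots,y_{N-1}\in E$ with all pairwise distances, and all distances $|x-y_i|$, at least $\delta$. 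Applying the constraint to $(x,y_1,\dots,y_{N-1})$ gives
$$u_0(x)\le c(x,y_1,\dots,y_{N-1})-\sum_{i=1}^{N-1}u_0(y_i)\le \tfrac{N(N-1)}{2\delta}+(N-1)K=:M.$$

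Next I define
$$u(x):=\inf_{y_1,\dots,y_{N-1}\in\R^d}\Bigl\{c(x,y_1,\dots,y_{N-1})-\sum_{i=1}^{N-1}u_0(y_i)\Bigr\},$$
which is the self-consistent formula of the statement (after at most one further iteration of this transformation to reach a fixed point). From $\sum_ju_0\le c$ one gets $u\ge u_0$ pointwise, so $u$ has a dual value at least that of $u_0$, hence is itself a maximizer. The same Lusin/well-separated-configuration argument keeps the upper bound $u\le M$ uniform, while the \emph{lower bound} is now immediate from the formula: since $c\ge0$ and $u_0\le M$,
$$u(x)\ge\inf_{y_1,\dots,y_{N-1}}\bigl\{0-(N-1)M\bigr\}=-(N-1)M.$$
The integral identity $\int\sum_ju(x_j)\,d\hat P=N\int u\,dp$ then follows immediately from the marginal conditions $\pi^\#_j\hat P=p$, once boundedness places $u$ in $L^1(p)$.

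Finally, when $p=\rho\,dx/N$, the uniform bound $u\le M$ forces the infimum in the formula to be effectively attained only on configurations with $\min_i|x-y_i|\ge\delta_0>0$, locally uniformly in $x$: otherwise $c(x,y)\to+\infty$ would contradict $u(x)\le M$. On such configurations $x\mapsto c(x,y_1,\dots,y_{N-1})$ is smooth with $|\nabla_x c|\le (N-1)/\delta_0^2$, exhibiting $u$ as a local infimum of uniformly Lipschitz functions, hence itself locally Lipschitz; Rademacher's theorem then yields differentiability a.e.\ with locally bounded gradient. The principal obstacle is the uniform boundedness of $u_0$ and $u$, which is the only place where the Coulomb cost is used in a way that goes beyond the abstract dual formulation: both the singularity at collisions (to make the constraint nontrivial) and the boundedness on well-separated configurations (to give the quantitative estimate) are indispensable.
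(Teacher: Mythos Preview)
Your overall strategy is sound and in several respects cleaner than the paper's, but there is a genuine gap at the heart of the construction. You define $u:=u_0^c$ by
$$u(x)=\inf_{y_1,\dots,y_{N-1}}\Bigl\{c(x,y_1,\dots,y_{N-1})-\sum_{i=1}^{N-1}u_0(y_i)\Bigr\}$$
and then assert that $u$ is a maximizer and that ``at most one further iteration'' yields the self-consistent formula. Neither claim holds as stated. From the definition one only obtains the \emph{mixed} inequality $u(x)+\sum_{i}u_0(y_i)\le c$; summing the bounds $u(x_j)\le c(x_1,\dots,x_N)-\sum_{i\ne j}u_0(x_i)$ over $j$ gives merely $\sum_j u(x_j)\le Nc-(N-1)\sum_j u_0(x_j)$, which does \emph{not} imply $\sum_j u(x_j)\le c$. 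Thus $u_0^c$ is in general not admissible, and $u\ge u_0$ alone does not make it a maximizer. Moreover, plain iteration does not stabilize: already for $N=2$ one has $u_0^{ccc}=u_0^{c}\ne u_0^{cc}$ in general, so neither one nor two applications of the transform produce a fixed point of $u=u^c$. The paper handles this (Step~3 of Appendix~\ref{app_proof}) by a different device: with $\bar u:=u_0^c$ it shows that $\tilde u:=\tfrac12(u_0+\bar u)$ is admissible and lies between $u_0$ and $\bar u$, and then takes $v:=\sup\{w:\ u_0\le w\le\bar u,\ w\text{ admissible}\}$, proving by maximality that $v=\bar v$. You need some such construction; the naive iteration is not enough.

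Once that gap is closed, the rest of your argument is not only correct but genuinely different from, and in places sharper than, the paper's. The paper obtains boundedness by first analyzing the support of an optimal plan $P_{\rm opt}$ (Steps~1--2) to extract a local separation scale $\alpha(R)$, and then running a rather delicate density-point argument on the contact set $G$ (Step~4). You bypass the optimal plan entirely: your upper bound on $u_0$ comes straight from the constraint by testing against $N-1$ well-separated points of a level set $\{u_0\ge -K\}$ (this works, though you should pick $N$ such points so that for any $x$ at least $N-1$ are $\delta$-far from it; the invocation of Lusin is unnecessary, finiteness $p$-a.e.\ suffices), and the lower bound then drops out of $c\ge0$ together with the self-consistent formula. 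Likewise your regularity argument is more direct: boundedness of $u$ forces all near-minimizing configurations to have pairwise separation $\ge\delta_0$ \emph{uniformly}, so $u$ is in fact a global infimum of uniformly Lipschitz functions of $x$ and hence globally Lipschitz, whereas the paper's Step~5 only obtains local Lipschitz continuity via the $R$-dependent $\alpha(R)$. What your route buys is independence from the structure of $P_{\rm opt}$ and a stronger conclusion; what the paper's route buys is a rigorous path to the fixed-point potential, which is precisely the piece you are missing.
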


In Sec.~\ref{sec_resDFT} we have already discussed the physical meaning of the Kantorovich potential $u$: it is an effective
single particle potential, playing the same role of the KS potential in the strong-interaction limit. 

The proof of Theorem \ref{thsumm} is discussed in Appendix~\ref{app_proof}. We were also able to prove, as reported in Appendix \ref{app_fN2}, the existence of an optimal map ({\em co-motion function}) $f$ in the special case $N=2$, in any dimension $d$.
In the following section we show some explicit computations for simple cases.

\section{Analytical examples}\label{oned}
The purpose of this section is to illustrate the optimal transport reformulation of the strictly correlated electron problem using simple examples. Results similar to those reported here have been already obtained from physical considerations in Refs.~\onlinecite{Sei-PRA-99,SeiGorSav-PRA-07,GorSei-PCCP-10} and \onlinecite{RasSeiGor-PRB-11}, where solutions using chemical and physical densities have been presented and discussed. In a way, this section is mainly addressed to the mass transportation community, with examples of the SCE problem translated in their familiar language. The DFT reader can also gain insight about the mass transportation formulation of the SCE problem from these examples by comparing them with those of Refs.~\onlinecite{Sei-PRA-99,SeiGorSav-PRA-07,GorSei-PCCP-10} and \onlinecite{RasSeiGor-PRB-11}. 

We first consider the radial problem for two particles in a given dimension $d$, and then the case of $N$ particles in $d=1$ dimension.

\subsection{The radial $d$-dimensional case for $N=2$}\label{radial}

Here we deal with the radial case $\rho(x)=\rho(|x|)$ when the number $N$ of particles is two.

The mass density $\rho(|x|)$ is transported on itself in an optimal way by a transport map $f$ whose existence has been proved in Appendix \ref{app_fN2}. According to the one-dimensional calculations of the next subsection, for every half-line starting from the origin the mass density $r^{d-1}\rho(r)$ is transported on the opposite half-line in an optimal way. In other words we have
$$f(x)=-\frac{x}{|x|}a(|x|)$$
where the function $a(r)$ can be computed by solving the ordinary differential equation (ODE)
$$a'(r)\big(a(r)\big)^{d-1}\rho\big(a(r)\big)=-r^{d-1}\rho(r)$$
which gives
$$\int_0^{a(r)}s^{d-1}\rho(s)\,ds=\frac{1}{d\omega_d}-\int_0^rs^{d-1}\rho(s)\,ds$$
being $\omega_d$ the $d$-volume of the unit ball in $\R^d$. The Kantorovich potential $u(r)$ is obtained differentiating the dual relation $u(x)+u(y)=1/|x-y|$ at the optimal points, which gives
$$u(r)=-\int_0^r\frac{1}{(s+a(s))^2}\,ds+\frac{1}{2}\int_0^{+\infty}\frac{1}{(s+a(s))^2}\,ds\;.$$
For instance, if $d=2$ and $\rho(r)$ is the Gaussian function $\rho(r)=ke^{-kr^2}/\pi$ we find
$$a(r)=\sqrt{-\frac{1}{k}\log(1-e^{-kr^2})}.$$
Notice that these results were already obtained from physical arguments by Seidl \cite{Sei-PRA-99} in his first paper on strictly correlated electrons.

It must be also noticed that replacing the Coulomb repulsion $1/|x-y|$ by the more moderate repulsion ({\it harmonic interaction}) $-|x-y|^2/2$, similar calculations give, due to the concavity of the cost function,
$$f(x)=-x,\quad\hbox{with Kantorovich potential }u(r)=-r^2,$$
as it was already discussed in the appendix of Ref.~\onlinecite{SeiGorSav-PRA-07}.

\subsection{The case $N=2$ and $d=1$ dimension}\label{onedim2}

We take $N=2$ particles in one dimension and we first consider the simple case
\begin{equation}
\rho_1(x)=\rho_2(x)=\left\{
\begin{array}{ll}
a\quad\hbox{if }|x|\le a/2\\
0\quad\hbox{otherwise}
\end{array}\right.
\label{eq_rhosquarewell}
\end{equation}
and
$$c(x,y)=\frac{1}{|x-y|}\;.$$
By symmetry, the goal is to send the interval $[0,a/2]$ into $[-a/2,0]$ by a transportation map $f$ with minimal cost
$$F(f)=a\int_0^{a/2}\frac{1}{x-f(x)}\,dx.$$
Since the function $t\mapsto1/t$ is convex on $\R^+$, by Jensen inequality we have
$$F(f)\ge\frac{a^3}{4}\left(\int_0^{a/2}x-f(x)\,dx\right)^{-1}.$$
Taking into account that $\int_0^{a/2}x\,dx=a^2/8$ and
$$\int_0^{a/2}f(x)\,dx=\int_{-a/2}^0y\,dy=-\frac{a^2}{8}$$
we obtain that $F(f)\ge a$ for every transport map $f$. Choosing
$$f(x)=x-\frac{a}{2}$$
we have $F(f)=a$ which shows that $f$ is optimal. The plot of the optimal map $f$ on $[-a/2,a/2]$ is shown in Fig.~\ref{fig_example1}. This is the same optimal map used in Ref.~\onlinecite{RasSeiGor-PRB-11}.

\begin{figure}[h]
\begin{center}
\includegraphics[width=8.0cm]{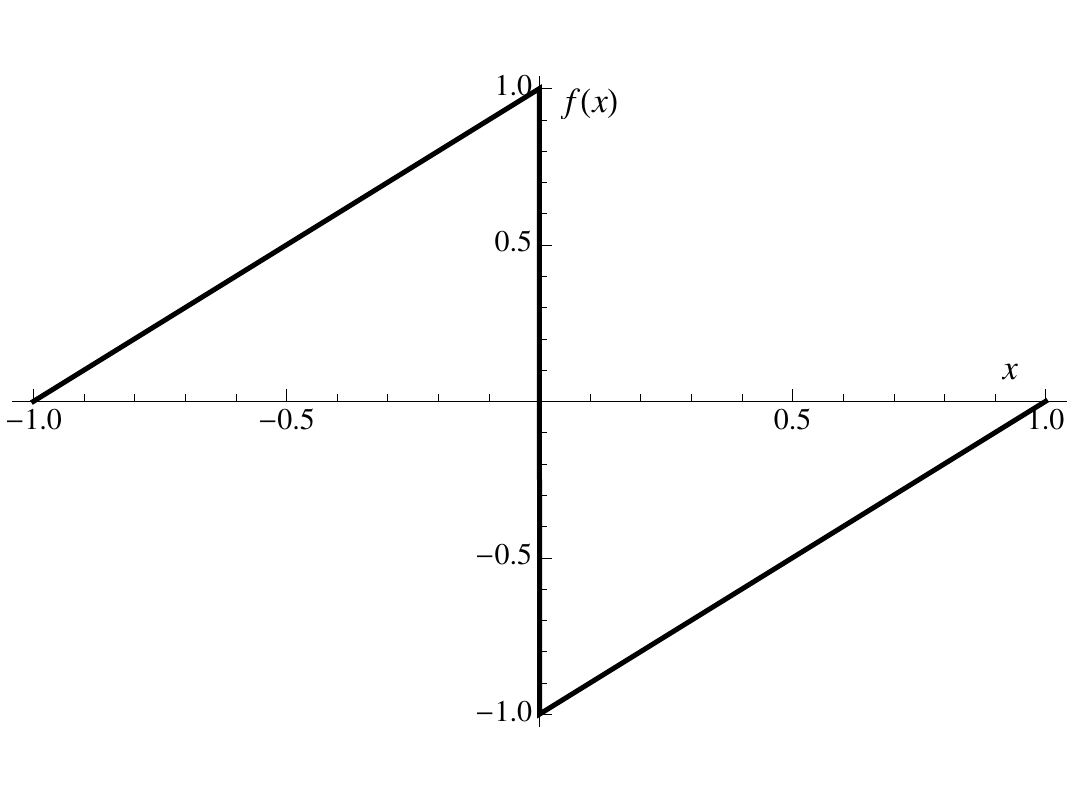}
\end{center}
\caption{The optimal map $f$ for the density of Eq.~\eqref{eq_rhosquarewell} with $a=2$.}
\label{fig_example1}
\end{figure}

Similar computations can be made for different densities $\rho$. Let us denote by $r_1$ the ``first half'' of $\rho$ and by $r_2$ the ``second half''; there is no loss of generality if we assume that the point where $\rho$ splits is the origin. In other words,
\begin{eqnarray*}
r_1 & = & \rho\hbox{ on }]-\infty,0[,\ r_2=\rho\hbox{ on }]0,+\infty[,\\
& & \hbox{ with }  \int_{-\infty}^0r_1\,dx=\int_0^{+\infty}r_2\,dx=1/2.
\end{eqnarray*}
The best transport map $f$ sends $r_1$ onto $r_2$, so from the differential relation
$$f'(x)r_2\big(f(x)\big)=r_1(x),$$
taking into account that $f(-\infty)=0$, we find
$$f(x)=R_2^{-1}\Big(R_1(x)+\frac{1}{2}\Big)\quad\hbox{for }x<0$$
where $R_1$ and $R_2$ are the two primitives of $r_1$ and $r_2$ respectively, vanishing at the origin. Analogously, we obtain
$$f(x)=R_1^{-1}\Big(R_2(x)-\frac{1}{2}\Big)\quad\hbox{for }x>0,$$
which agrees with the results of Refs.~\onlinecite{Sei-PRA-99} and \onlinecite{SeiGorSav-PRA-07}.
For instance, if
\beq
\rho(x)=\frac{a-|x|}{a^2}\qquad\hbox{defined in }[-a,a]
\label{eq_denstriang}
\eeq
we get
$$f(x)=\frac{x}{|x|}\left(\sqrt{2a|x|-x^2}-a\right)\qquad\hbox{on }[-a,a]$$
plotted in Fig.~\ref{fig_ftriangN2} for $a=1$.

\begin{figure}[h]
\begin{center}
\includegraphics[width=8.0cm]{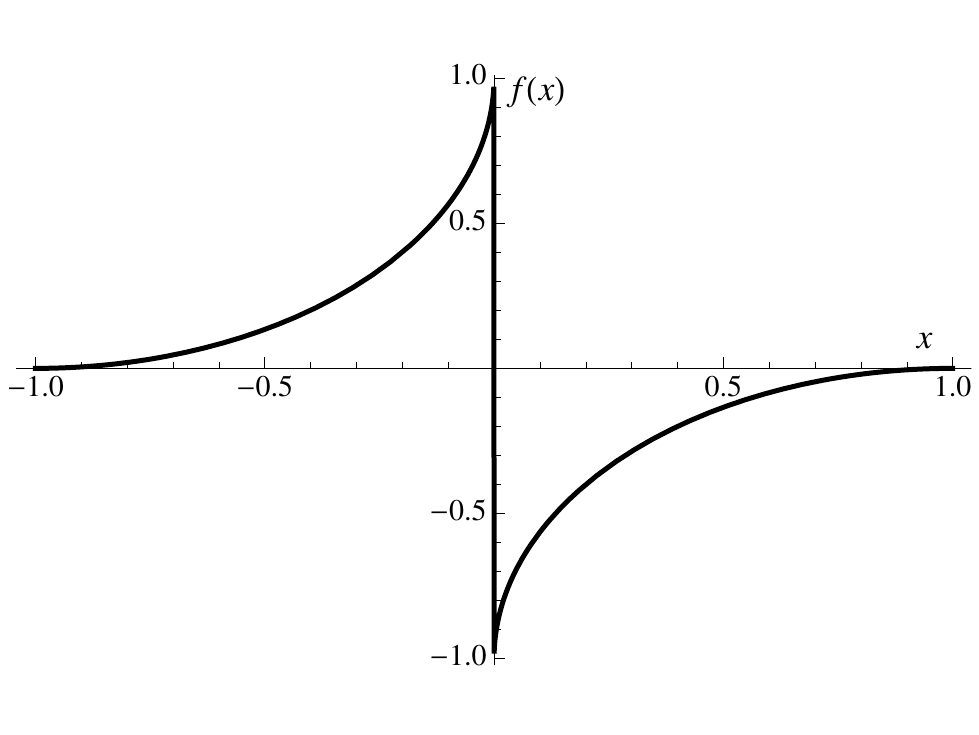}
\end{center}
\caption{The optimal map $f$ for the density of Eq.~\eqref{eq_denstriang} in the case $a=1$.}
\label{fig_ftriangN2}
\end{figure}

Taking the Gaussian
\beq
\rho(x)=(\pi)^{-1/2}e^{-x^2}
\label{eq_densgauss}
\eeq
we obtain the optimal map shown in Fig.~\ref{fig_fgauss}.

\begin{figure}[h]
\begin{center}
\includegraphics[width=8.0cm]{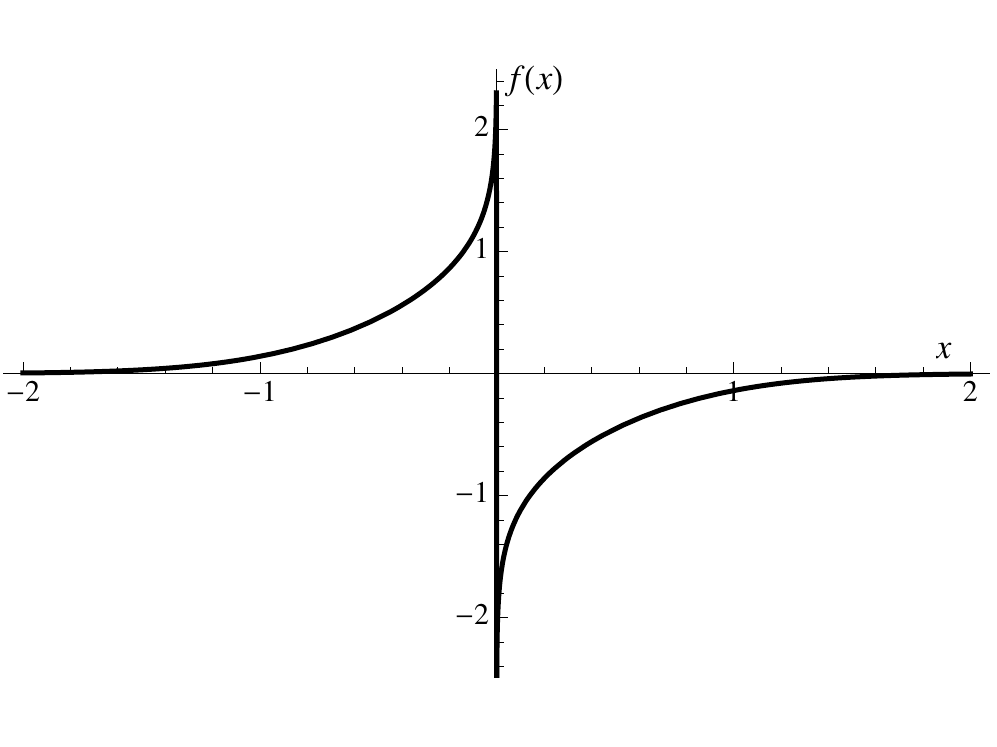}
\end{center}
\caption{The optimal map $f$ for the Gaussian density of Eq.~\eqref{eq_densgauss}.}
\label{fig_fgauss}
\end{figure}

\subsection{The case $N\ge3$ and $d=1$ dimension}\label{onedim3}

We consider the case of three particles in $\R$, with cost
$$c(x,y,z)=\frac{1}{|x-y|}+\frac{1}{|y-z|}+\frac{1}{|z-x|}.$$
The transport maps formulation aims to find two maps $f_1:\R\to\R$ and $f_2:\R\to\R$ such that $f_1^\#\rho=f_2^\#\rho=\rho$ which minimize the quantity
$$\int_\R\left(\frac{1}{|x-f_1(x)|}+\frac{1}{|f_1(x)-f_2(x)|}+\frac{1}{|f_2(x)-x|}\right)\,d\rho(x),$$
with $f_2=f_1\circ f_1$, as it follows from the indistinguishability of the three particles.

The simplest case occurs when the marginal source $\rho$ is of the form
$$\rho=\frac{1}{3}\sum_{i=1}^3\delta_{x_i}$$
in which the optimal transport maps $f_1$ are all the permutations of the points $\{x_i\}_{i=1,2,3}$ that do not send any point in itself.
In the case of a diffuse source $\rho$ we split $\rho$ into its three tertiles $\rho_1,\rho_2,\rho_3$ with $\int\rho_i\,dx=1/3$ and we send $\rho_1\to\rho_2$, $\rho_2\to\rho_3$, $\rho_3\to\rho_1$ through monotone transport maps. For instance, if $\rho$ is the Lebesgue measure on the interval $[0,1]$ we have that the optimal transport map $f_1$ is
$$f_1(x)=\left\{\begin{array}{ll}
x+1/3&\hbox{if }x\le2/3\\
x-2/3&\hbox{if }x>2/3,
\end{array}\right.$$
and correspondingly
$$f_2(x)=f_1^2(x)=\left\{\begin{array}{ll}
x+2/3&\hbox{if }x\le1/3\\
x-1/3&\hbox{if }x>1/3.
\end{array}\right.$$
\begin{figure}[h]
\begin{center}
\includegraphics[width=8.0cm]{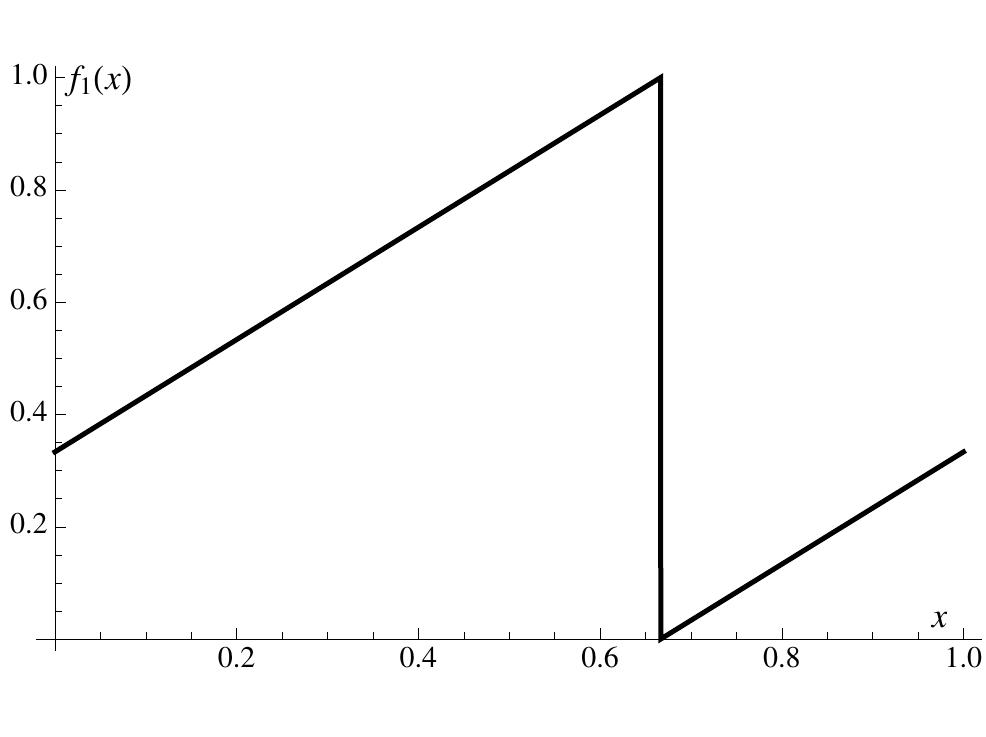}
\includegraphics[width=8.0cm]{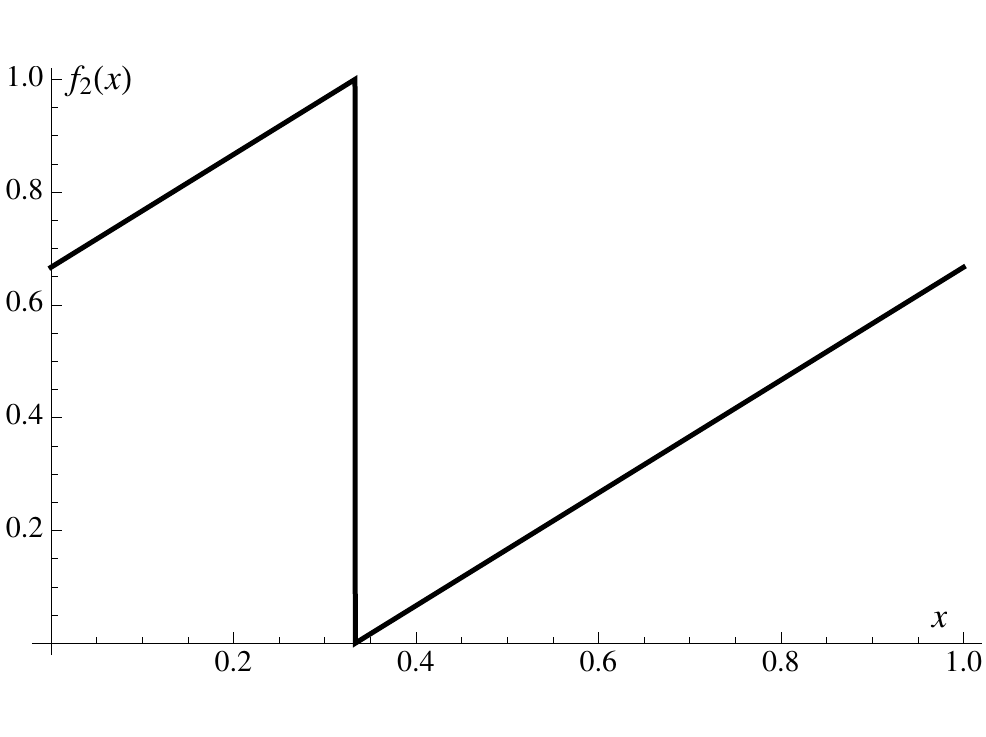}
\end{center}
\caption{The optimal maps $f_1$ and $f_2=f_1^2$ for $N=3$ and $\rho=dx$ on $[0,1]$.}
\end{figure}

Let us show that $f_1$ and $f_2$ induce an optimal plan $P$. We check the
optimality by calculating an explicit Kantorovich potential $u$ which satisfies, for all $x,y,z$:
\begin{equation}\label{ineq3part}
u(x)+u(y)+u(z)\le\frac{1}{|x-y|}+\frac{1}{|x-z|}+\frac{1}{|y-z|}
\end{equation}
and, $\forall \;x$,
\begin{eqnarray}
u(x)+u(f_1(x))+u(f_2(x)) & = & 
\frac{1}{|x-f_1(x)|}+\frac{1}{|x-f_2(x)|} \nonumber \\
& + & \frac{1}{|f_1(x)-f_2(x)|}.
\label{eq3part}
\end{eqnarray}
We remark that the right-hand side in equation \eqref{eq3part} is equal to $15/2$. To calculate $u$ we observe that the inequality \eqref{ineq3part} holds everywhere, then differentiating with respect to $x$ we obtain at a point $(x,y,z)$ of equality
$$u'(x)=-\frac{x-y}{|x-y|^3}-\frac{x-z}{|x-z|^3}\;.$$
Replacing $y$ by $f_1(x)$ and $z$ by $f_2(x)$ we obtain
\begin{equation}
u'(x)=\left\{\begin{array}{ll}
\frac{45}{4}&\hbox{ if }x\in[0,\frac{1}{3}),\\
0&\hbox{ if }x\in(\frac{1}{3},\frac{2}{3}),\\
-\frac{45}{4}&\hbox{ if }x\in(\frac{2}{3},1].
\end{array}
\right.
\end{equation}
Then we find that
\begin{equation}
u(x)=\left\{\begin{array}{ll}
\frac{45}{4}x+c&\hbox{ if }x\in[0,\frac{1}{3}],\\
\frac{15}{4}+c&\hbox{ if }x\in [\frac{1}{3},\frac{2}{3}],\\
-\frac{45}{4}x+\frac{45}{4}+c&\hbox{ if }x\in[\frac{2}{3},1].
\end{array}
\right.
\end{equation}
Equation \eqref{eq3part} gives $c=0$. By construction, then, $u$ satisfies
\eqref{eq3part} and we only need to show that it satisfies also \eqref{ineq3part}. To see this we remark that by symmetry it is enough to check the inequality in the set where $x<y<z$ and that on this set the function
$$(x,y,z)\mapsto\frac{1}{|x-y|}+\frac{1}{|x-z|}+\frac{1}{|y-z|}$$
is convex. On the other hand by the concavity of $u$ the function
$$(x,y,z)\mapsto u(x)+u(y)+u(z)$$
is concave. These two maps coincide together with their gradients at the point $(1/6,3/6,5/6)$ and then the convex one has to stay above the concave.

In the case of a possibly singular source optimal maps do not in general exist, and the optimal configurations are given by probability plans $P$. For instance, if
$$\rho=\frac{1}{4}\sum_{i=1}^4\delta_{x_i}$$
with $x_i$ ordered in an increasing way, we have that the optimal transport plan sends:
$$\begin{array}{ll}
&\delta_{x_1}+\frac{1}{3}\delta_{x_2}\to
\frac{2}{3}\delta_{x_2}+\frac{2}{3}\delta_{x_3}\\
&\frac{2}{3}\delta_{x_2}+\frac{2}{3}\delta_{x_3}\to
\frac{1}{3}\delta_{x_3}+\delta_{x_4}\\
&\frac{1}{3}\delta_{x_3}+\delta_{x_4}\to
\delta_{x_1}+\frac{1}{3}\delta_{x_2}.
\end{array}$$

When $N\ge4$ similar arguments as above can be developed, giving transport maps $f_1$, $f_1^2$, \dots, $f_1^{N-1}$ that minimize the total cost
$$\int c\big(x,f_1(x),f_1^2(x),\dots,f_1^{N-1}(x)\big)\,d\rho(x),$$
where $c(x_1,\dots,x_N)$ is given in Eq.~\eqref{eq_Coulcost}. Some of these results were also obtained by Seidl \cite{Sei-PRA-99}, again using physical arguments.

\section{Conclusions and perspectives}\label{sec_conc}
We have shown that the strong-interaction limit of electronic density functional theory can be rewritten as a mass transportation theory problem, thus creating a link between two different, well established, research areas. This is already interesting {\it per se}: it allows to import and generalize results from one domain to the other. In particular, with our reformulation we were able to prove immediately several results on the strong-interaction limit of DFT. Even more interesting, we could show that the problem of finding the minimum interaction energy in a given density can be rewritten in a convenient dual form consisting of a minimization under linear constraints, paving the way to the use of linear programming techniques to solve the strictly-correlated electron problem.

Dual reformulations have been already proved very useful in the context of electronic structure calculations: for example, in Ref.~\onlinecite{CanStoLew-JCP-06} the solution of the physical hamiltonian by  optimizing the second-order reduced density matrix has been tackled with a suitable dual problem. The use of Legendre transform techniques for the simplification of
minimizations involving permutations in the many electron problem has also been stressed and applied in Refs.~\onlinecite{OsbLev-PRA-86} and \onlinecite{OsbLev-PRA-87}, with very interesting results. All these approaches focused on the quantum mechanical problem, while here we deal with a special problem that is essentially of classical nature, but contains quantum-mechanical information via the electronic density. We know now that the optimal transport formulation is the right mathematical framework for the strong-interaction limit of density functional theory.

It is also worth to mention that the formalism developed here can be of interest for approaches to the many-electron problem which use $k$-electron distribution functions (i.e., the diagonal of the $k^{\rm th}$ order reduced density matrix), such as those of Refs.~\onlinecite{Aye-PRA-06} and \onlinecite{Liu-THESIS-06}. In fact, in these approaches one usually constructs a $k$-electron distribution function $\rho_k(\rv_1,\dots,\rv_k)$ with a given density, possibly minimizing the electron-electron repulsion energy. This would result in the same Kantorovich formulation considered here.

 The formal and practical aspects of our new reformulation are enticing for DFT: making routinely available the piece of exact information contained in the strong-interaction limit can largely broaden its applicability, both by developing a ``SCE DFT'' \cite{GorSeiVig-PRL-09,LiuBur-JCP-09,GorSei-PCCP-10} (which uses a strong interacting system as a reference), and via new exchange-correlation functionals for standard KS DFT \cite{SeiPerKur-PRL-00,GorSei-PCCP-10,Bec-cecam-11}.
Future work will be devoted to exploit the practical aspects of this reformulation.

{\em Note added in proof:} While this article was in review, we become aware that a related work  \cite{CotFriKlu-arxiv-11}
was posted on arXiv. In  \cite{CotFriKlu-arxiv-11} the particle-particle interaction term is
minimized by only considering the pair density. By neglecting the
$N$-representability issue, this leads to a two-particle problem with
only one map (or co-motion function).

\section*{Acknowledgments}
This work was supported by the Netherlands Organization for Scientific Research (NWO) through a Vidi grant.
PG-G thanks Roland Assaraf for suggesting to read about mass transportation theory, Giovanni Vignale and Michael Seidl for useful discussions, and Andr\'e Mirtschink for a critical reading of the manuscript. 

\appendix
\section{Proof of Theorem \ref{thsumm}}
\label{app_proof}
For the sake of simplicity we give a sketch of the proof only in the case of two particles, the general case can be obtained in a similar way. We thus consider the case $N=2$ in
$\R^d$ with two equal marginals, $p_1=p_2=p$ with $p\in P(\R^d)$. The problem is then
\begin{equation}\label{cost2}
\min\Big\{\int\frac{1}{|x-y|}\,dP(x,y)\ :\ \pi_j^\# P=p_j,\hbox{ for }j=1,2\Big\},
\end{equation}
and $p$ will be assumed absolutely continuous i.e.~of the form $\rho(x)\,dx$ with $0\le\rho(x)$ and $\int\rho(x)\,dx=1$.

By definition, the Kantorovich potential $u$ is a maximizer for the dual problem according to Theorem \ref{dualbis} and Remark \ref{remsim}. If we denote by $P$ an optimal plan of transport (which exists by Theorem \ref{exis}) then in the case $N=2$ considered here $u$ maximizes the functional
\begin{equation}\label{dualefisico}
\int_{\R^d\times\R^d}\big(u(x)+u(y)\big)\,dP(x,y)
\end{equation}
among all the functions which satisfies the constraint
\begin{equation}\label{vincolonewton}
u(x)+u(y)\le\frac{1}{|x-y|}.
\end{equation}
Under the current assumptions such a maximizer exists by Theorem \ref{dualbis} above, but $u$ is only a Borel function which takes values in $[-\infty,+\infty)$. Much more then is needed to carry on the necessary computations and we will deduce the needed properties. The proof will be made of several steps. But first, let us fix the following notation: for a transport plan $P$ we denote by $\spt(P)$ the support of $P$, i.e. the smallest closed set $F$ such that $P(\R^d\times\R^d\setminus F)=0$.

\bigskip\underline{\bf Step 1} 
The first step is the following intuitive fact about optimal transport plans. If $\Po$ is an optimal transport plan, then
$$0<|x-y|\qquad\forall(x,y)\in\spt(\Po).$$
Indeed, if by contradiction a point $(\overline x,\overline x)\in\spt(\Po)$ we may find a better transport plan $\tilde P$ by exchanging the mass around $(\overline x,\overline x)$ with the one around another point $(\tilde x,\tilde y)\in\spt(\Po)$ having $\tilde x\neq\tilde y\neq\overline x$.

\bigskip\underline{\bf Step 2} Actually something more can be said. Let $\Po$ be an optimal transport plan; then for all $R>0$ there exists $\alpha(R)>0$ such that
$$\alpha(R)<|x-y|\qquad\forall x\in B(0,R),\ \forall(x,y)\in\spt(P).$$
Indeed let $x\in B(0,R)$ and $(x,y)\in\spt(\Po)$; by the point above and by compactness, the diagonal and the support of $\Po$ have positive distance in the set $\overline{B(0,R)}\times\overline{B(0,2R)}$ and we denote by $\beta(R)$ such a distance. It follows that
$$\min\{\beta(R),R\}\le|x-y|.$$
We then define $\alpha(R):=\min\{\beta(R),R\}$. Moreover, we may choose the function $\alpha$ non increasing. 

\bigskip\underline{\bf Step 3} Using the second step we now prove that there are Kantorovich potentials which are bounded. First we remark that we can choose a Kantorovich potential $v$ which satisfies
\begin{equation}\label{hopflax}
v(x)=\inf_{y\in\R^d}\Big\{\frac{1}{|x-y|}-v(y)\Big\}.
\end{equation}
We start with a potential $u$ and we notice that by definition 
$$u(x)\le\inf_{y\in\R^d}\Big\{\frac{1}{|x-y|}-u(y)\Big\}.$$
Then we can consider
$$\overline u(x)=\inf_{y\in\R^d}\Big\{\frac{1}{|x-y|}-u(y)\Big\}.$$
Clearly $u\le\overline u$. Even if $\overline u$ does not satisfy the constraint, from the definition we get
$$\overline u(x)+\overline u(y)=\inf_{z\in\R^d}\Big\{\frac{1}{|x-z|}-u(z)\Big\}+\inf_{z\in\R^d}\Big\{\frac{1}{|z-y|}-u(z)\Big\},$$
and taking $y$ as test in the first term of the right-hand-side and $x$ in the second it follows that
$$\overline u(x)+\overline u(y)\le\frac{2}{|x-y|}-u(y)-u(x),$$
or equivalently if we define $\tilde u(x)=2^{-1}\big(u(x)+\overline u(x)\big)$
$$u(x)\le\tilde u(x)\le\overline u(x)\quad\mbox{and}\quad\tilde u(x)+\tilde u(y)\le\frac{1}{|x-y|}.$$
We may now define
\begin{eqnarray}
v(x) & = & \sup\big\{w(x)\ :\ u(x)\le w(x)\le\overline u(x) \nonumber \\ 
& & \mbox{ and }w\mbox{ satisfies }\eqref{vincolonewton}\big\}. \nonumber
\end{eqnarray}
The function $v(x)$ clearly satisfies \eqref{vincolonewton}, and if $v\neq\overline v$ since
\begin{eqnarray}
\overline v(x)& = & \inf_{y\in\R^d}\Big\{\frac{1}{|x-y|}-v(y)\Big\} \nonumber \\
& \le & \inf_{y\in\R^d}\Big\{\frac{1}{|x-y|}-u(y)\Big\}=\overline u(x) \nonumber
\end{eqnarray}
then $v< \tilde v\le\overline u$ which contradicts the maximality of $v$. Finally, $v$ maximizes the cost \eqref{dualefisico} since $u\le v$.

\bigskip\underline{\bf Step 4} As anticipated in Theorem \ref{thsumm}
if $v$ is a Kantorovich potential which satisfies \eqref{hopflax}, 
then there exists a costant $C$ such that $|v|\le C$. Let $\Po$ be an optimal plan of transport. The condition
$$\int_{\R^d\times\R^d}\big(v(x)+v(y)\big)\,d\Po(x,y)
=\int_{\R^d\times\R^d}\frac{d\Po(x,y)}{|x-y|}$$
together with the condition
$$v(x)+v(y)\le\frac{1}{|x-y|}$$
implies that
$$v(x)+v(y)=\frac{1}{|x-y|}\mbox{ for $\Po$-a.e. $x$ and }y,$$
and then in particular $v$ is finite $\rho$-a.e. Moreover, setting
$$G=\Big\{x\ :\ -\infty<v(x)\mbox{ and }\exists\, y\mbox{ s.t. }v(x)+v(y)=\frac{1}{|x-y|}\Big\}$$
it follows from the discussion above that $\rho(G)=1$. Let $\overline x\in G$ be a point of density 1 for $G$ and let $\alpha$ and $r$ be such that $\alpha>r$ and
\begin{enumerate}
\item for all $s\le r$ we have $|B(\overline{x},s)\cap G|/|B(\overline{x},s)|\ge3/4$,
\item for all $x\in B(\overline x,r)$ if $(x,y)\in\spt(P)$ then $\alpha<|x-y|$.
\end{enumerate}
Setting $L=v(\overline x)$ we have that for every $z\in\R^d\setminus B(\overline x,r/4)$
\begin{equation}\label{above1}
v(z)=\inf_{y\in\R^d}\Big\{\frac{1}{|y-z|}-v(y)\Big\}\le\frac{4}{r}-L.
\end{equation}
On the other hand for every $z\in B(\overline x,r/4)\cap G$ there exists $y$ such that $\alpha<|z-y|$, $(z,y)\in\spt(\Po)$ and $v(z)+v(y)=1/|z-y|$. Then
\begin{eqnarray}
v(z)& = & \inf_{y\in\R^d}\Big\{\frac{1}{|y-z|}-v(y)\Big\}\nonumber \\
&\le &
\inf_{|y-z|\ge\alpha}\Big\{\frac{1}{|y-z|}-v(y)\Big\}=v(z) \nonumber
\end{eqnarray}
since $r<\alpha$, $\R^d\setminus B(z,\alpha)\subset\R^d\setminus B(\overline x,r/4)$, and then from the estimate \eqref{above1}
\begin{equation}\label{below1} 
v(z)=\inf_{|y-z|\ge\alpha}\Big\{\frac{1}{|y-z|}-v(y)\Big\}\ge L-\frac{1}{r}.
\end{equation}
To get a control of $v$ from above in $B(\overline x,r/4)$ we observe that if $\lambda\le r4^{-1/d}$ and $z\in B(\overline x,r/4)$, since
$$|B(z,\lambda)|=\omega_d\lambda^d\le\frac{\omega_d}{4}r^d\;,$$
then there exists at least one $y_z\in G\cap B(\overline x,r/4)\setminus B(z,\lambda)$ such that, from the estimate \eqref{below1},
\begin{eqnarray}
v(z) &\le & \inf_{|y-z|\ge\lambda}\Big\{\frac{1}{|y-z|}-v(y)\Big\} \nonumber \\
& \le & \frac{1}{|y_z-z|}-v(y_z)\le\frac{1}{\lambda}+\frac{1}{r}-L.
\label{above2}
\end{eqnarray}
Estimates \eqref{above1} and \eqref{above2} give a bound from above on $v$ by a constant $K$. The estimate from below is now straightforward since
$$v(x)=\inf_{\R^d}\Big\{\frac{1}{|x-y|}-v(y)\Big\}\ge-K.$$

\bigskip\underline{\bf Step 5} The previous steps permit to gain more regularity on the potential $v$. Let indeed $v$ be a Kantorovich potential which satisfies \eqref{hopflax}; we show that $v$ is differentiable almost everywhere. To see this we consider the family of functions
$$v_n(x)=\inf_{\alpha(n)<|x-y|}\Big\{\frac{1}{|x-y|}-v(y)\Big\}.$$
Since $\alpha$ is nonincreasing we have
$$v_{n+1}(x)\le v_n(x).$$
Moreover each $v_n$ is a Lipschitz function of Lipschitz constant $1/\alpha^2(n)$. By Step 2 for $x\in G$ if $|x|<m<n$ then $v(x)=v_n(x)=v_m(x)$. Then on $G$ the potential $v$ coincide locally with a Lipschitz function which is well known to be differentiable almost everywhere.

\section{Proof of the existence of an optimal transport map for $N=2$}
\label{app_fN2}
Once the existence of an a.e.~differentiable Kantorovich potential $v$ is established, we may consider the problem of showing the existence of an optimal transport map ({\it co-motion function}) $f$. In the case $N=2$ the proof can be achieved by using the basic idea of differentiating inequality \eqref{vincolonewton} at the points of equality.

Let $\Po$ be an optimal transport plan and let $G$ be defined as above. If $G_N:=G\cap B(0,N)$ we prove that for almost every $x\in G_N$ there exists only one $y$ such that $(x,y)\in\spt(\Po)$ and we give an explicit expression for such $y$. It follows that $\Po$ is induced by an optimal transport map. Let $v_N$ be the function defined above; since $v_N$ coincides with $v$ on $G_N$, for every $x\in G_N$ and $y$ the inequality
$$v_N(x)+v(y)\le\frac{1}{|x-y|}$$
holds. Since $\Po$ is an optimal transport plan and $\rho=a(x)\,dx$, then for $\Po$-a.e. $(x,y)\in\spt(\Po)$, $x$ belongs to $G_N$ for a suitable $N$, $x$ is a density point for $G_N$ and $v_N$ is differentiable at $x$. Since for $z\in G_N$
$$v_N(z)\le\frac{1}{|z-y|}-v(y)$$
and equality holds for $z=x$ then if we differentiate the functions $v_N$ and $\psi(z)=\frac{1}{|z-y|}-v(y)$ we obtain
$$\nabla v_N(x)=-\frac{x-y}{|x-y|^3}$$
from which it follows
\begin{equation}\label{inverse}
y=x+\frac{1}{|\nabla v_N (x)|^{3/2}}\nabla v_N (x).
\end{equation}
From equation \eqref{inverse} we deduce that for $\Po$-a.e. $(x,y)$ the point $y$ is uniquely determined by $x$ and this concludes the proof of the existence of an optimal transport map $f$, by defining
$$f(x)=x+\frac{1}{|\nabla v_N (x)|^{3/2}}\nabla v_N (x)$$
whenever $x\in B(0,N)$.


\end{document}